\documentclass{article}
\usepackage{graphicx}  % Required for inserting images

\usepackage{amsthm}
\newtheorem{remark}{Remark}
\usepackage{xcolor}  % Required for \textcolor
\usepackage[margin=1in]{geometry}

\usepackage{amsmath,amssymb,mathtools}

\newtheorem{theorem}{Theorem}
\newtheorem{lemma}{Lemma}
\newtheorem{corollary}{Corollary}

\newtheorem{problem}{Problem}
\theoremstyle{definition} 
\newtheorem{definition}{Definition}[section]

\newcommand{\id}{\mathrm{id}}
\newcommand{\Tr}{\operatorname{tr}}
\newcommand{\cH}{\mathcal{H}}
\newcommand{\cK}{\mathcal{K}}
\newcommand{\cL}{\mathsf{L}}
\newcommand{\cD}{\mathsf{D}}
\newcommand{\C}{\mathbb{C}}
\newcommand{\normone}[1]{\left\|#1\right\|_{1}}
\newcommand{\normtwo}[1]{\left\|#1\right\|_{2}}
\newcommand{\normd}[1]{\left\|#1\right\|_{\diamond}}

\title{A dimension-independent strict submultiplicativity for the transposition map in diamond norm}
\author{
    Hyunho Cha \\
    \small NextQuantum and Department of Electrical and Computer Engineering \\
    \small Seoul National University, Seoul 08826, Republic of Korea \\
    \small \texttt{ovalavo@snu.ac.kr}
}
\date{}

\begin{document}

\maketitle

\begin{abstract}
We prove that there exists an absolute constant $\alpha<1$ such that for every finite dimension $d$ and every quantum channel $T$ on $\cL(\mathbb{C}^d)$,
$
\normd{\Theta\circ(\id-T)} \le \alpha \normd{\Theta} \normd{\id-T},
$
where $\Theta$ is the transposition map. In fact we show the explicit choice $\alpha=1/\sqrt{2}$ works.
\end{abstract}

\section{Setup and definitions}

Let $\cH \cong \mathbb{C}^d$ be a $d$-dimensional Hilbert space.
We write $\cL(\cH)$ for the space of linear operators on $\cH$, and
\[
\cD(\cH)=\{\rho\in \cL(\cH): \rho\succeq 0,\ \Tr(\rho)=1\}
\]
for the set of density operators.

\paragraph{Transposition map.}
Fix an orthonormal basis $\{|1\rangle,\dots,|d\rangle\}$ of $\cH$. The transposition map
$\Theta:\cL(\cH)\to\cL(\cH)$ is defined by
\[
\Theta(X) = X^{\mathsf T},
\]
which depends on the chosen basis.

\paragraph{Quantum channels.}
A \emph{quantum channel} $T:\cL(\cH)\to\cL(\cH)$ is a completely positive trace-preserving (CPTP) linear map.

\paragraph{Norms.}
The trace norm and Hilbert--Schmidt norm of an operator $X$ are
\[
\normone{X}=\Tr\sqrt{X^\dagger X},\qquad \normtwo{X}=\sqrt{\Tr(X^\dagger X)}.
\]
The \emph{diamond norm} of a linear map $\Phi:\cL(\cH)\to\cL(\cH)$ is
\[
\normd{\Phi}\;:=\;\sup_{k\ge 1}\;\sup_{\rho\in \cD(\cH\otimes\mathbb{C}^k)}\;
\normone{(\Phi\otimes \id_k)(\rho)}.
\]
It is standard that for maps acting on $\cL(\mathbb{C}^d)$, the supremum over $k$ can be restricted to $k=d$. We will use this without further comment.

\section{Main result}

\begin{theorem}
\label{thm:main}
Let $\cH\cong \mathbb{C}^d$ and let $T:\cL(\cH)\to\cL(\cH)$ be a quantum channel. Then
\[
\normd{\Theta\circ(\id-T)} \;\le\; \alpha \normd{\Theta} \normd{\id-T}
\]
with $\alpha=1/\sqrt{2}$. In particular, since $\normd{\Theta}=d$~\emph{\cite{tomiyama1983transpose}},
\[
\normd{\Theta\circ(\id-T)} \;\le\; \frac{d}{\sqrt2} \normd{\id-T}.
\]
\end{theorem}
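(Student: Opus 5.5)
The plan is a purely norm-comparison argument: pass from the trace norm to the Hilbert--Schmidt norm, which the partial transpose preserves, and then come back. The saving factor $1/\sqrt2$ should come from the fact that $(\id-T)\otimes\id_d$ always outputs a \emph{traceless Hermitian} operator, for which the Hilbert--Schmidt norm is at most $1/\sqrt2$ times the trace norm.

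\emph{Step 1 (reduction).} By the remark after the definition of the diamond norm, we may take $k=d$. Fix $\rho\in\cD(\cH\otimes\C^d)$ and set
\[
Y=((\id-T)\otimes\id_d)(\rho),
\]
so that $\normone{Y}\le\normd{\id-T}$. Since $(\Theta\circ(\id-T))\otimes\id_d=(\Theta\otimes\id_d)\circ((\id-T)\otimes\id_d)$, it suffices to show
\[
\normone{(\Theta\otimes\id_d)(Y)}\le \frac{d}{\sqrt2}\normone{Y}.
\]
Together with $\normd{\Theta}=d$ this gives the theorem.

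\emph{Step 2 (properties of $Y$).} Because $T$ is CP, it is Hermiticity-preserving, so $Y$ is Hermitian. Because $T$ is trace preserving, $\Tr Y=\Tr\rho-\Tr\rho=0$.

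\emph{Step 3 (the $1/\sqrt2$ bound, the key step).} Take the Jordan decomposition $Y=P-N$ with $P,N\succeq0$ and $PN=0$. Tracelessness gives $\Tr P=\Tr N=\normone{Y}/2$. Orthogonality of the supports, together with $\normtwo{A}\le\Tr A$ for $A\succeq0$, gives
\[
\normtwo{Y}^2=\normtwo{P}^2+\normtwo{N}^2\le(\Tr P)^2+(\Tr N)^2=\tfrac12\normone{Y}^2.
\]
Hence $\normtwo{Y}\le\normone{Y}/\sqrt2$.

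\emph{Step 4 (partial transpose and dimension).} The partial transpose only permutes matrix entries in the product basis, so it preserves the Hilbert--Schmidt norm: $\normtwo{(\Theta\otimes\id_d)(Y)}=\normtwo{Y}$. The operator $(\Theta\otimes\id_d)(Y)$ acts on a $d^2$-dimensional space. Cauchy--Schwarz on its singular values then gives
\[
\normone{(\Theta\otimes\id_d)(Y)}\le\sqrt{d^2}\,\normtwo{(\Theta\otimes\id_d)(Y)}=d\normtwo{Y}\le\frac{d}{\sqrt2}\normone{Y}\le\frac{d}{\sqrt2}\normd{\id-T}.
\]
Taking the supremum over $\rho$ finishes the proof.

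I expect Step 3 to be the only place needing care, since it is where the factor $1/\sqrt2$ arises from tracelessness. The rest is routine: the restriction to $k=d$ is essential, because it is what fixes the dimension factor in Step 4 at $d$ rather than something larger.
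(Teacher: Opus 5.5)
Your proposal is correct and follows essentially the same route as the paper. The chain is the same: Cauchy--Schwarz on the $d^2$ singular values, invariance of the Hilbert--Schmidt norm under partial transposition, and the $1/\sqrt{2}$ bound for traceless Hermitian operators. Your Jordan decomposition $Y=P-N$ is just the paper's split of the spectrum into positive and negative eigenvalues.
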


The proof proceeds by reducing to a matrix inequality for the partial transpose applied to traceless Hermitian matrices.
The specific form $\Theta \circ (\id-T)$ appears in~\cite{holevo2001evaluating}, but the analysis therein relied on the standard diamond-norm submultiplicativity corresponding to the case $\alpha=1$.
The possibility of obtaining a \emph{dimension-independent} constant $\alpha < 1$ for a stronger submultiplicativity result remained open.

\subsection{Auxiliary lemmas}

\begin{lemma}
\label{lem:traceless-hs}
If $X=X^\dagger$ and $\Tr(X)=0$, then
\[
\normtwo{X} \le \frac{1}{\sqrt2} \normone{X}.
\]
\end{lemma}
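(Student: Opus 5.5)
The plan is to pass to the eigenvalues of $X$ and reduce everything to an elementary inequality about real vectors. Since $X$ is Hermitian, it has real eigenvalues $\lambda_1,\dots,\lambda_d$. In terms of these,
\[
\normone{X}=\sum_i|\lambda_i| \qquad\text{and}\qquad \normtwo{X}^2=\sum_i\lambda_i^2 ,
\]
and the hypothesis $\Tr(X)=0$ says $\sum_i\lambda_i=0$.

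Next, split the spectrum into its positive and negative parts. Let $P=\sum_{\lambda_i>0}\lambda_i$ and $N=\sum_{\lambda_i<0}|\lambda_i|$. The trace condition gives $P=N$, so
\[
\normone{X}=P+N=2P .
\]

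Now bound the Hilbert--Schmidt norm by these two sums. Nonnegative numbers satisfy $\sum a_i^2\le(\sum a_i)^2$. Applying this separately to the positive eigenvalues and to the absolute values of the negative ones gives
\[
\normtwo{X}^2=\sum_{\lambda_i>0}\lambda_i^2+\sum_{\lambda_i<0}\lambda_i^2\le P^2+N^2=2P^2=\tfrac12\normone{X}^2 .
\]
Taking square roots yields $\normtwo{X}\le\frac{1}{\sqrt2}\normone{X}$, which is the claim.

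There is no real obstacle in this argument. The only point to watch is that the inequality $\sum a_i^2\le(\sum a_i)^2$ must be applied within each sign class separately, not to all eigenvalues at once; this is exactly where the traceless condition enters. Equality holds when $X$ has one positive and one negative eigenvalue of equal magnitude, so the constant $1/\sqrt2$ is sharp.
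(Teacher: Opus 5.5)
Your proof is correct and takes the same approach as the paper. Both split the real spectrum into positive and negative parts, use $\Tr(X)=0$ to equate the two sums, and apply $\sum a_i^2\le(\sum a_i)^2$ within each sign class to get $\normtwo{X}^2\le 2t^2=\tfrac12\normone{X}^2$; your equality remark also matches the equality condition the paper uses in Section~\ref{sec:positive_gap}.
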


\begin{proof}
Let $\{\lambda_i\}_{i=1}^n$ be the eigenvalues of $X$ (real because $X$ is Hermitian).
Write the positive eigenvalues as $p_1,\dots,p_r\ge 0$ and the negative eigenvalues as
$-q_1,\dots,-q_s$ with $q_1,\dots,q_s > 0$.
Since $\Tr(X)=\sum_i\lambda_i=0$, we have
\[
\sum_{i=1}^r p_i \;=\; \sum_{j=1}^s q_j \;=:\; t.
\]
Hence
\[
\normone{X}=\sum_i|\lambda_i|=\sum_{i=1}^r p_i+\sum_{j=1}^s q_j = 2t.
\]
Also,
\begin{equation}
\label{eq:cross_terms_ineq}
\normtwo{X}^2=\sum_i\lambda_i^2=\sum_{i=1}^r p_i^2+\sum_{j=1}^s q_j^2
\le\Big(\sum_{i=1}^r p_i\Big)^2+\Big(\sum_{j=1}^s q_j\Big)^2
= t^2+t^2=2t^2,
\end{equation}
where we used the elementary inequality $\sum a_i^2\le (\sum a_i)^2$ for nonnegative $a_i$.
Therefore $\normtwo{X}\le \sqrt{2}\,t=\normone{X}/\sqrt{2}$.
\end{proof}

\begin{lemma}
\label{lem:1-2}
Let $Y$ be an operator on an $N$-dimensional Hilbert space. Then
\[
\normone{Y} \le \sqrt{N}\,\normtwo{Y}.
\]
\end{lemma}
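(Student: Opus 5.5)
The plan is to diagonalize $Y^\dagger Y$ (equivalently, take a singular value decomposition of $Y$) and then apply the Cauchy--Schwarz inequality to the vector of singular values.

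\emph{Step 1.} Let $s_1,\dots,s_N\ge 0$ be the singular values of $Y$, i.e.\ the eigenvalues of $\sqrt{Y^\dagger Y}$, padded with zeros so there are exactly $N$ of them. From the definitions in Section~1,
\[
\normone{Y}=\Tr\sqrt{Y^\dagger Y}=\sum_{i=1}^N s_i,
\qquad
\normtwo{Y}^2=\Tr(Y^\dagger Y)=\sum_{i=1}^N s_i^2.
\]
The second identity holds because $Y^\dagger Y$ has eigenvalues $s_i^2$.

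\emph{Step 2.} Apply Cauchy--Schwarz in $\mathbb{R}^N$ to the vectors $(s_1,\dots,s_N)$ and $(1,\dots,1)$:
\[
\sum_{i=1}^N s_i\le \Big(\sum_{i=1}^N 1\Big)^{1/2}\Big(\sum_{i=1}^N s_i^2\Big)^{1/2}=\sqrt N\,\normtwo{Y}.
\]
Combined with Step 1, this is exactly the claimed bound $\normone{Y}\le\sqrt N\,\normtwo{Y}$.

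The only step needing care is Step 1: one must check that the trace norm and the Hilbert--Schmidt norm are given by these spectral formulas. This follows from the spectral theorem applied to the positive semidefinite operator $Y^\dagger Y$, so I expect no genuine obstacle. I also note that $Y$ need not be Hermitian for this argument to work, which matters because the lemma will presumably be applied to partial transposes of Hermitian operators on a space of dimension $N=d^2$.
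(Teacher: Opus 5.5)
Your proof is correct and matches the paper's argument: both write $\normone{Y}$ and $\normtwo{Y}$ in terms of the zero-padded singular values and apply Cauchy--Schwarz against the all-ones vector in $\mathbb{R}^N$. One small aside is that the partial transpose of a Hermitian operator is still Hermitian (though generally not positive), so covering non-Hermitian $Y$ is harmless but not needed for the paper's application.
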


\begin{proof}
Let $s_1,\dots,s_N$ be the singular values of $Y$ (padding with zeros if necessary).
Then $\normone{Y}=\sum_i s_i$ and $\normtwo{Y}=\sqrt{\sum_i s_i^2}$.
By Cauchy--Schwarz,
\begin{equation}
\label{eq:singular_values_CS}
\Big(\sum_{i=1}^N s_i\Big)^2 \le N \sum_{i=1}^N s_i^2,
\end{equation}
which gives the claim.
\end{proof}

\begin{lemma}
\label{lem:pt-hs}
Let $\cH,\cK$ be finite-dimensional and let $\Theta$ denote transposition on $\cH$.
Then for all $X\in \cL(\cH\otimes\cK)$,
\[
\normtwo{(\Theta\otimes \id)(X)}=\normtwo{X}.
\]
\end{lemma}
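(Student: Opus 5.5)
The plan is to prove Lemma~\ref{lem:pt-hs} by showing that the partial transpose $\Theta\otimes\id$ only permutes the entries of $X$ in a product basis. Since the Hilbert--Schmidt norm is the Euclidean norm of the entries in any orthonormal basis, it is invariant under such a permutation.

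Fix the orthonormal basis $\{|i\rangle\}_{i=1}^{d}$ of $\cH$ used to define $\Theta$, and any orthonormal basis $\{|a\rangle\}$ of $\cK$. The product vectors $|i\rangle\otimes|a\rangle$ form an orthonormal basis of $\cH\otimes\cK$. Expand
\[
X=\sum_{i,j}\sum_{a,b} x_{ia,jb}\,|i\rangle\langle j|\otimes|a\rangle\langle b|.
\]
By linearity and the definition of $\Theta$ we have $\Theta(|i\rangle\langle j|)=|j\rangle\langle i|$. Therefore
\[
(\Theta\otimes\id)(X)=\sum_{i,j,a,b} x_{ia,jb}\,|j\rangle\langle i|\otimes|a\rangle\langle b|.
\]
In other words, the $(ia,jb)$ entry of $(\Theta\otimes\id)(X)$ equals $x_{ja,ib}$.

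Next I would use the identity $\normtwo{Y}^2=\Tr(Y^\dagger Y)=\sum_{\mu,\nu}|\langle\mu|Y|\nu\rangle|^2$, which holds for any orthonormal basis $\{|\mu\rangle\}$. Applying it in the product basis gives
\[
\normtwo{(\Theta\otimes\id)(X)}^2=\sum_{i,j,a,b}|x_{ja,ib}|^2 .
\]
The relabeling $(i,j)\mapsto(j,i)$ is a bijection of the index set, so the right-hand side equals $\sum|x_{ia,jb}|^2=\normtwo{X}^2$. Taking square roots gives the claim.

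There is no real obstacle here. The only point needing care is that the entrywise formula for $\normtwo{\cdot}$ must be applied in a basis adapted to the tensor structure. The product basis is such a basis, because the map $\Theta\otimes\id$ acts on the matrix units $|i\rangle\langle j|\otimes|a\rangle\langle b|$ of exactly this basis by permuting them.
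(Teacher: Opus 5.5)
Your proof is correct and is essentially the paper's argument. The paper notes that transposition is a Hilbert--Schmidt isometry on $\cL(\cH)$ and extends this ``entrywise'' to $\cH\otimes\cK$; your computation in the product basis, where $\Theta\otimes\id$ just permutes the entries $x_{ia,jb}$, makes that extension step explicit.
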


\begin{proof}
In the computational basis on $\cH$, the transposition map is an isometry for the Hilbert--Schmidt inner product:
\[
\langle A,B\rangle_\mathrm{HS}:=\Tr(A^\dagger B),\qquad \Tr\big((A^{\mathsf T})^\dagger B^{\mathsf T}\big)=\Tr(A^\dagger B).
\]
Applying this entrywise on $\cH$ and trivially on $\cK$ yields
\[
\Tr\Big(\big((\Theta\otimes\id)(X)\big)^\dagger (\Theta\otimes\id)(X)\Big)=\Tr(X^\dagger X),
\]
i.e.\ $\normtwo{(\Theta\otimes\id)(X)}=\normtwo{X}$.
\end{proof}

\subsection{Proof of Theorem~\ref{thm:main}}

\begin{proof}[Proof of Theorem~\ref{thm:main}]
Let $\Phi:=\id-T$. Since $T$ is CPTP, the map $\Phi$ is Hermiticity-preserving and trace-annihilating:
\[
\Tr(\Phi(Z))=\Tr(Z)-\Tr(T(Z))=0\qquad \forall\, Z\in\cL(\cH).
\]
Fix an ancilla $\cK\cong \mathbb{C}^d$. Let $\rho\in \cD(\cH\otimes\cK)$ and define
\begin{equation}
\label{eq:X_definition}
X := (\Phi\otimes \id)(\rho)\in \cL(\cH\otimes\cK).
\end{equation}
Then:
\begin{itemize}
\item $X$ is Hermitian, since $\Phi$ is Hermiticity-preserving and $\rho$ is Hermitian.
\item $X$ is traceless:
\[
\Tr(X)=\Tr\!\big((\Phi\otimes\id)(\rho)\big)=\Tr\!\big(\Phi(\Tr_{\cK}\rho)\big)=0,
\]
because $\Phi$ is trace-annihilating.
\item $(\Theta\circ \Phi\otimes \id)(\rho) = (\Theta\otimes\id)(X)$.
\end{itemize}

We now bound $\normone{(\Theta\otimes\id)(X)}$ in terms of $\normone{X}$.
The operator $(\Theta\otimes\id)(X)$ acts on $\cH\otimes\cK$, which has dimension $N=d^2$.
By Lemma~\ref{lem:1-2},
\[
\normone{(\Theta\otimes\id)(X)} \le \sqrt{d^2}\,\normtwo{(\Theta\otimes\id)(X)} = d\,\normtwo{(\Theta\otimes\id)(X)}.
\]
By Lemma~\ref{lem:pt-hs}, $\normtwo{(\Theta\otimes\id)(X)}=\normtwo{X}$, hence
\[
\normone{(\Theta\otimes\id)(X)} \le d\,\normtwo{X}.
\]
Finally, since $X$ is traceless Hermitian, Lemma~\ref{lem:traceless-hs} gives
\[
\normtwo{X} \le \frac{1}{\sqrt2} \normone{X}.
\]
Combining these inequalities yields
\[
\normone{(\Theta\otimes\id)(X)} \le \frac{d}{\sqrt2} \normone{X}.
\]
Substituting back $X=(\Phi\otimes\id)(\rho)$ gives, for every $\rho\in \cD(\cH\otimes\cK)$,
\begin{equation*}
% \label{eq:pointwise_inequality}
\normone{(\Theta\circ\Phi\otimes\id)(\rho)} \le \frac{d}{\sqrt2} \normone{(\Phi\otimes\id)(\rho)}.
\end{equation*}
Taking the supremum over $\rho$ proves
\[
\normd{\Theta\circ\Phi} \le \frac{d}{\sqrt2} \normd{\Phi}.
\]
Thus
\[
\normd{\Theta\circ(\id-T)} \le \frac{d}{\sqrt2} \normd{\id-T}.
\]
Since $\normd{\Theta}=d$, this is equivalent to
\[
\normd{\Theta\circ(\id-T)} \le \frac{1}{\sqrt2} \normd{\Theta} \normd{\id-T},
\]
completing the proof.
\end{proof}

\begin{remark}
The constant $1/\sqrt{2}$ is universal (independent of $d$ and $T$) and arises from the
fact that $X=\big((\id-T)\otimes\id\big)(\rho)$ is always traceless because $\id-T$ is trace-annihilating.
\end{remark}

\subsection{Positive gap in finite dimension}
\label{sec:positive_gap}

Regrettably, the inequality in Theorem~\ref{thm:main} is still not tight for any finite $d$, except in the trivial case $T=\id$, where both LHS and RHS become zero. Fix any nonzero channel difference $\Phi = \id - T \ne 0$ (equivalently $T \ne \id$). Rewrite the pointwise inequality (for every $\rho$):
\begin{equation}
\label{eq:pointwise_inequality_rewritten}
\underbrace{\normone{(\Theta\circ\Phi\otimes\id)(\rho)}}_{=: L(\rho)} \le \frac{d}{\sqrt2} \underbrace{\normone{(\Phi\otimes\id)(\rho)}}_{=: R(\rho)}.
\end{equation}
Taking suprema gives $\sup_\rho L(\rho) \le \frac{d}{\sqrt{2}} \sup_\rho R(\rho)$. In finite dimension, the suprema are maxima, so equality
$$
\sup_\rho L(\rho) = \frac{d}{\sqrt{2}} \sup_\rho R(\rho)
$$
can only happen if there exists some $\rho_\star$ such that $R(\rho_\star)=\sup_\rho R(\rho)$ and the pointwise inequality Eq.~\eqref{eq:pointwise_inequality_rewritten} is tight at $\rho_\star$, i.e., $L(\rho_\star) = \frac{d}{\sqrt{2}} R(\rho_\star)$. Let $X_\star := (\Phi \otimes \id)(\rho_\star)$. Since $\Phi \ne 0$ and $\rho_\star$ maximizes $R$, we have $X_\star \ne 0$. Now, the chain of inequalities shows
$$
\normone{(\Theta\otimes\id)(X_\star)} \le d\normtwo{X_\star} \le \frac{d}{\sqrt2} \normone{X_\star}.
$$
So equality in Theorem~\ref{thm:main} forces equality in Lemmas~\ref{lem:traceless-hs}~and~\ref{lem:1-2} at some nonzero $X_\star$. But the equality conditions of Lemmas~\ref{lem:traceless-hs}~and~\ref{lem:1-2} are incompatible (unless $X=0$).

\paragraph{Equality condition for Lemma~\ref{lem:traceless-hs}.}
For a nonzero traceless Hermitian $X$, equality in Lemma~\ref{lem:traceless-hs} holds \emph{iff} $X$ has exactly two nonzero eigenvalues, $+t$ and $-t$. This follows from Eq.~\eqref{eq:cross_terms_ineq}, where equality requires all cross terms to vanish. Equivalently,
\begin{equation}
\label{eq:contradiction_first_component}
\operatorname{rank}(X_\star)=2 \qquad \text{(for nonzero equality cases)}.
\end{equation}

\paragraph{Equality condition for Lemma~\ref{lem:1-2}.}
Equality in Lemma~\ref{lem:1-2} holds \emph{iff} $s$ in Eq.~\eqref{eq:singular_values_CS} is proportional to $(1,\dots,1)$, i.e., all singular values are equal. If $Y \ne 0$, that forces all $N$ singular values to be the same positive number, hence
\begin{equation}
\label{eq:contradiction_second_component}
\operatorname{rank}\!\big((\Theta\otimes\id)(X_\star)\big) = \operatorname{rank}(Y_\star) = N = d^2.
\end{equation}

To proceed, we need the following definitions and identities.

\begin{definition}[Column-major vectorization]
For a matrix $A \in \C^{d\times d}$ with entries $A_{ij} = \langle i|A|j\rangle$, define
$$
|\!\operatorname{vec}(A)\rangle := \sum_{i=1}^d \sum_{j=1}^d A_{ij} |i\rangle_\cH \otimes |j\rangle_\cK.
$$
\end{definition}

\begin{definition}
The swap operator $F \in \cL(\cH \otimes \cK)$ is defined by:
$$
F(|i\rangle \otimes |j\rangle) = |j\rangle \otimes |i\rangle.
$$
\end{definition}

\begin{lemma}
\label{lemma:partial_trace_identical}
For any operator $Z \in \cL(\cH \otimes \cK)$,
$\Tr_\cH\!\big( (T \otimes \id)(Z) \big) = \Tr_\cH(Z)$.
\end{lemma}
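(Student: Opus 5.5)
The plan is to reduce, by linearity, to product operators and then use only that $T$ is trace-preserving. Both sides of the identity are linear in $Z$: the map $Z\mapsto \Tr_\cH\big((T\otimes\id)(Z)\big)$ is a composition of linear maps, and so is $Z\mapsto\Tr_\cH(Z)$. Every $Z\in\cL(\cH\otimes\cK)$ is a finite linear combination of simple tensors $A\otimes B$ with $A\in\cL(\cH)$ and $B\in\cL(\cK)$; for instance, take the matrix units $|i\rangle\langle j|\otimes|k\rangle\langle l|$. It therefore suffices to verify the identity for $Z=A\otimes B$.

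For such a product operator, $(T\otimes\id)(A\otimes B)=T(A)\otimes B$ by definition of the tensor product of linear maps. The partial trace over $\cH$ acts on simple tensors as $\Tr_\cH(C\otimes B)=\Tr(C)\,B$. Hence
\[
\Tr_\cH\big((T\otimes\id)(A\otimes B)\big)=\Tr\big(T(A)\big)\,B=\Tr(A)\,B=\Tr_\cH(A\otimes B).
\]
The middle equality uses only that $T$ is trace-preserving, which holds for every channel, and holds for arbitrary, not necessarily Hermitian, $A$. Extending by linearity gives the claim for all $Z$.

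There is no real obstacle. The only point needing care is that trace preservation, $\Tr(T(A))=\Tr(A)$, holds for all $A\in\cL(\cH)$ rather than only for density operators. This follows because a CPTP map is by definition a linear map whose trace preservation is stated on all of $\cL(\cH)$; equivalently, $\cL(\cH)$ is spanned by density operators and both sides are linear in $A$. Complete positivity is not used. An alternative, basis-free route is duality: for any $B\in\cL(\cK)$, $\Tr\big[(T\otimes\id)(Z)(I\otimes B)\big]=\Tr\big[Z\,(T^\dagger(I)\otimes B)\big]$, and $T^\dagger(I)=I$ is exactly the trace-preserving condition. I would include this only as a remark.
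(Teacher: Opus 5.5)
Your proof is correct and follows essentially the same route as the paper: decompose $Z=\sum_j A_j\otimes B_j$, apply $T\otimes\id$ termwise, and use $\Tr(T(A_j))=\Tr(A_j)$. The only difference is that the paper derives this last identity from a Kraus representation via $\sum_k E_k^\dagger E_k=\id$, whereas you invoke trace preservation directly, which is cleaner and shows that complete positivity is not needed.
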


\begin{proof}
We can write $Z$ as
$$
Z = \sum_j A_j \otimes B_j,
$$
where $A_j \in \cL(\cH)$ and $B_j \in \cL(\cK)$.
There exists a set of Kraus operators $\{E_k\}$ acting on $\cH$ such that
$$
T(X) = \sum_k E_k X E_k^\dagger, \qquad \sum_k E_k^\dagger E_k = \id.
$$
By linearity we have
\begin{equation*}
(T \otimes \id)Z
= \sum_j T(A_j) \otimes B_j
= \sum_j \left( \sum_k E_k A_j E_k^\dagger \right) \otimes B_j.
\end{equation*}
Now we take the partial trace $\Tr_\cH$:
\begin{align*}
\Tr_\cH\!\big((T \otimes \id)Z\big) & = \sum_j \Tr_\cH\!\left( \sum_k E_k A_j E_k^\dagger \otimes B_j \right)%\\
= \sum_j \left( \sum_k \Tr(E_k A_j E_k^\dagger) \right) B_j\\
& = \sum_j \Tr\!\left( \left( \sum_k E_k^\dagger E_k \right) A_j \right) B_j%\\
= \sum_j \Tr(A_j) B_j.
\end{align*}
Meanwhile, by the definition of the partial trace on the original operator $Z = \sum_j A_j \otimes B_j$,
$$
\Tr_\cH(Z) = \sum_j \Tr(A_j) B_j.
$$
Thus,
$$
\Tr_\cH\!\big( (T \otimes \id)(Z) \big) = \Tr_\cH(Z).
$$
\end{proof}

\begin{corollary}
\label{corollary:partial_trace_is_zero}
For $X$ in Eq.~\eqref{eq:X_definition}, $\Tr_\cH(X)=0$.
\end{corollary}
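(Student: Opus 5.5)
The plan is to apply Lemma~\ref{lemma:partial_trace_identical} directly to the definition of $X$ in Eq.~\eqref{eq:X_definition} and use linearity of the partial trace. Since $\Phi=\id-T$, we can write
\[
X=(\Phi\otimes\id)(\rho)=\rho-(T\otimes\id)(\rho).
\]
Taking $\Tr_\cH$ of both sides and using linearity gives
\[
\Tr_\cH(X)=\Tr_\cH(\rho)-\Tr_\cH\big((T\otimes\id)(\rho)\big).
\]

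Next, we invoke Lemma~\ref{lemma:partial_trace_identical} with $Z=\rho\in\cL(\cH\otimes\cK)$. This yields $\Tr_\cH\big((T\otimes\id)(\rho)\big)=\Tr_\cH(\rho)$, so the two terms cancel and $\Tr_\cH(X)=0$ as an operator on $\cK$. The lemma applies to arbitrary $Z$, so nothing about $\rho$ being a density operator is actually needed here. The only point to check is that $(\id\otimes\id)(\rho)=\rho$, which is immediate.

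There is no real obstacle in this argument; it is a one-line consequence of linearity and the preceding lemma. The main thing to state clearly is that the conclusion is the vanishing of the \emph{operator} $\Tr_\cH(X)\in\cL(\cK)$. This is strictly stronger than the scalar tracelessness $\Tr(X)=0$ established earlier, which follows by taking a further trace over $\cK$. This stronger fact is presumably what will be combined with the equality conditions Eq.~\eqref{eq:contradiction_first_component} and Eq.~\eqref{eq:contradiction_second_component} later.
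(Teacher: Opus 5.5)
Your proof is correct and matches the paper's argument: write $X=\rho-(T\otimes\id)(\rho)$, apply Lemma~\ref{lemma:partial_trace_identical} with $Z=\rho$, and cancel the two terms using linearity of $\Tr_\cH$. Your side remark is also accurate: the argument does not use that $\rho$ is a density operator, only that $T$ is trace-preserving.
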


\begin{proof}
From Lemma~\ref{lemma:partial_trace_identical} we have
$$
\Tr_\cH(X) = \Tr_\cH\!\big((\Phi\otimes \id)(\rho)\big) = \Tr_\cH\!\big(\rho - (T \otimes \id)(\rho)\big) = \Tr_\cH(\rho) - \Tr_\cH(\rho) = 0.
$$
\end{proof}

\begin{lemma}
\label{lemma:mnvec_man}
For any $M \in \C^{d\times d}$ on $\cH$ and $N \in \C^{d\times d}$ on $\cK$,
$$
(M \otimes N) |\!\operatorname{vec}(A)\rangle = |\!\operatorname{vec}(MAN^\top)\rangle.
$$
\end{lemma}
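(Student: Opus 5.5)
The statement $(M\otimes N)|\!\operatorname{vec}(A)\rangle = |\!\operatorname{vec}(MAN^\top)\rangle$ is a purely algebraic identity, bilinear in $(M,N)$ and linear in $A$. The plan is to verify it on basis elements and extend by linearity.

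First we reduce to matrix units. Write $A=\sum_{i,j}A_{ij}E_{ij}$, where $E_{ij}=|i\rangle\langle j|$. Then $|\!\operatorname{vec}(E_{ij})\rangle = |i\rangle_\cH\otimes|j\rangle_\cK$ by the definition of column-major vectorization. Both sides of the claimed identity are linear in $A$ (the right side because $A\mapsto MAN^\top$ is linear and $\operatorname{vec}$ is linear), so it suffices to treat $A=E_{ij}$.

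Next we compute both sides for $A=E_{ij}$. The left side is
\[
(M\otimes N)(|i\rangle\otimes|j\rangle) = M|i\rangle\otimes N|j\rangle = \sum_{k,l} M_{ki}N_{lj}\,|k\rangle\otimes|l\rangle .
\]
On the right side, $ME_{ij}N^\top = M|i\rangle\langle j|N^\top$, whose $(k,l)$ entry is $\langle k|M|i\rangle\langle j|N^\top|l\rangle = M_{ki}(N^\top)_{jl} = M_{ki}N_{lj}$. Applying $\operatorname{vec}$ therefore gives $\sum_{k,l}M_{ki}N_{lj}|k\rangle\otimes|l\rangle$, which agrees with the left side.

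Summing over $i,j$ with coefficients $A_{ij}$ then completes the proof. The one point needing care is that the transpose lands on $N$ rather than a conjugate transpose. The key observation is $\langle j|N^\top|l\rangle = \langle l|N|j\rangle$ in the fixed computational basis, which is exactly why $N^\top$ appears. Equivalently, one can read $|\!\operatorname{vec}(A)\rangle=(A\otimes\id)\sum_j|j\rangle|j\rangle$ and use the standard identity $(\id\otimes N)\sum_j|j\rangle|j\rangle=(N^\top\otimes\id)\sum_j|j\rangle|j\rangle$. I would present the entrywise computation as the proof and mention this maximally-entangled-vector form only as a remark.
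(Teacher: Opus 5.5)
Your proof is correct. The paper states this lemma without proof, treating it as a standard vectorization identity, so there is no argument of its own to compare against. Your reduction to matrix units $E_{ij}=|i\rangle\langle j|$, followed by the entry check $\langle k|M|i\rangle\langle j|N^\top|l\rangle=M_{ki}N_{lj}$, supplies exactly the routine verification the paper leaves implicit. The maximally-entangled-vector identity you mention is an equally valid one-line alternative.

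It is also good that you argued from the explicit formula $\sum_{i,j}A_{ij}|i\rangle_\cH\otimes|j\rangle_\cK$ rather than from the label ``column-major''. Under the usual ordering of $\cH\otimes\cK$, that formula is actually the row-major convention. This is why the identity puts the transpose on $N$, rather than taking the textbook column-major form in which the roles of $M$ and $N$ are swapped.
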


\begin{lemma}
\label{lemma:swap_positions_operators}
For any $X,Y \in \C^{d \times d}$,
$$
F(X\otimes Y) = (Y\otimes X)F.
$$
\end{lemma}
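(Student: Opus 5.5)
The plan is to verify the identity $F(X\otimes Y)=(Y\otimes X)F$ on a spanning set of product vectors and then extend by linearity. Since $F$ is linear and both sides are linear operators on $\cH\otimes\cK$ (with $\cH\cong\cK\cong\C^d$, so that $Y\otimes X$ makes sense on the same space), it suffices to check agreement on all simple tensors $|u\rangle\otimes|v\rangle$ with $u,v\in\C^d$. These span $\cH\otimes\cK$; the basis vectors $|i\rangle\otimes|j\rangle$ alone would already suffice.

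The first step is to extend the defining action of $F$ from basis vectors to arbitrary product vectors. Writing $|u\rangle=\sum_i u_i|i\rangle$ and $|v\rangle=\sum_j v_j|j\rangle$ and using bilinearity of the tensor product gives
\[
F(|u\rangle\otimes|v\rangle)=\sum_{i,j}u_iv_j\,|j\rangle\otimes|i\rangle=|v\rangle\otimes|u\rangle .
\]

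The second step is to compute both sides on $|u\rangle\otimes|v\rangle$. On the left, $(X\otimes Y)(|u\rangle\otimes|v\rangle)=X|u\rangle\otimes Y|v\rangle$, and applying the first step with the vectors $X|u\rangle$ and $Y|v\rangle$ yields $Y|v\rangle\otimes X|u\rangle$. On the right, $F(|u\rangle\otimes|v\rangle)=|v\rangle\otimes|u\rangle$, and then $(Y\otimes X)(|v\rangle\otimes|u\rangle)=Y|v\rangle\otimes X|u\rangle$. The two results coincide, so the operators agree on a spanning set and are therefore equal.

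There is no real obstacle in this argument. The only point needing care is the first step: $F$ is defined only on basis vectors, so its action on general product vectors must be derived, and it must be applied to $X|u\rangle\otimes Y|v\rangle$ rather than only to basis states. As an alternative, one could check matrix entries $\langle k,l|\cdot|i,j\rangle$ directly: both sides equal $Y_{kj}X_{li}$.
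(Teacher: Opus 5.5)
Your proof is correct and complete. The paper states this lemma without proof, treating it as a standard fact, so there is no argument to compare against. Your verification supplies the routine argument the paper leaves implicit: you extend $F$ from the basis $|i\rangle\otimes|j\rangle$ to product vectors by bilinearity, then check that both sides send $|u\rangle\otimes|v\rangle$ to $Y|v\rangle\otimes X|u\rangle$. Your alternative entrywise check, in which both sides have $\langle k,l|\cdot|i,j\rangle$-entry $Y_{kj}X_{li}$, works equally well.
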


\begin{lemma}
\label{lemma:swap_lemma}
$(\Theta \otimes \id)\big( |\!\operatorname{vec}(A)\rangle\langle\operatorname{vec}(B)| \big) = (\id \otimes A^\top) F (\id \otimes B^\ast)$.
\end{lemma}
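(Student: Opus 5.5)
The plan is to factor both vectors through the unnormalized maximally entangled vector $|\Omega\rangle := \sum_{i=1}^d |i\rangle_\cH\otimes|i\rangle_\cK = |\!\operatorname{vec}(\id)\rangle$. Once both vectors are written this way, the partial transpose only has to be evaluated on $|\Omega\rangle\langle\Omega|$, where it gives exactly $F$.

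\emph{Step 1.} Apply Lemma~\ref{lemma:mnvec_man} with $M=\id$, $A=\id$ and $N=A^\top$. This gives $(\id\otimes A^\top)|\Omega\rangle = |\!\operatorname{vec}(\id\cdot\id\cdot (A^\top)^\top)\rangle = |\!\operatorname{vec}(A)\rangle$. Likewise $|\!\operatorname{vec}(B)\rangle=(\id\otimes B^\top)|\Omega\rangle$. Taking adjoints and using $(B^\top)^\dagger = B^\ast$ (entrywise complex conjugate) gives $\langle\operatorname{vec}(B)| = \langle\Omega|(\id\otimes B^\ast)$. Hence
\[
|\!\operatorname{vec}(A)\rangle\langle\operatorname{vec}(B)| = (\id\otimes A^\top)\,|\Omega\rangle\langle\Omega|\,(\id\otimes B^\ast).
\]

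\emph{Step 2.} Show that $\Theta\otimes\id$ commutes with left and right multiplication by operators of the form $\id\otimes P$, i.e.
\[
(\Theta\otimes\id)\big((\id\otimes P)Z(\id\otimes Q)\big)=(\id\otimes P)\,(\Theta\otimes\id)(Z)\,(\id\otimes Q).
\]
By linearity it suffices to take $Z=C\otimes D$. Then the left side is $(\Theta\otimes\id)(C\otimes PDQ) = C^\top\otimes PDQ$, which equals the right side.

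\emph{Step 3.} Compute the partial transpose of $|\Omega\rangle\langle\Omega|$ directly. Since $|\Omega\rangle\langle\Omega| = \sum_{i,j}|i\rangle\langle j|\otimes|i\rangle\langle j|$, we get
\[
(\Theta\otimes\id)(|\Omega\rangle\langle\Omega|) = \sum_{i,j}|j\rangle\langle i|\otimes|i\rangle\langle j|.
\]
Applying this to $|k\rangle\otimes|l\rangle$ gives $|l\rangle\otimes|k\rangle$. So the operator equals $F$ by the definition of the swap.

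Combining Steps 1--3 yields $(\Theta\otimes\id)\big(|\!\operatorname{vec}(A)\rangle\langle\operatorname{vec}(B)|\big)=(\id\otimes A^\top)F(\id\otimes B^\ast)$. Lemma~\ref{lemma:swap_positions_operators} is not needed here.

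The only real pitfall is bookkeeping of transposes versus conjugates in Step 1. With column-major vectorization and $N$ acting on $\cK$, the factor on the ket side must be $A^\top$, and the bra side must carry $B^\ast$ rather than $B^\dagger$. I would double-check both by evaluating on basis matrices $A=|a\rangle\langle b|$, $B=|c\rangle\langle e|$. In that case the left side is $|b\rangle\langle e|\otimes|a\rangle\langle c|$, and a direct expansion of the right side should give the same operator.
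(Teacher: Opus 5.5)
Your proof is correct, and every step checks out under the paper's conventions.
\begin{itemize}
\item The factorization $|\!\operatorname{vec}(A)\rangle\langle\operatorname{vec}(B)|=(\id\otimes A^\top)|\Omega\rangle\langle\Omega|(\id\otimes B^\ast)$ holds. Here $^\ast$ is entrywise conjugation, as the paper's later use $U^\ast=V^\ast D^\ast V^\top$ confirms.
\item $\Theta\otimes\id$ commutes with multiplication by $\id\otimes P$ on either side.
\item $(\Theta\otimes\id)(|\Omega\rangle\langle\Omega|)=F$.
\end{itemize}
The paper states this lemma without proof, and likewise Lemma~\ref{lemma:mnvec_man}, which you invoke. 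So there is no argument to compare against, and your factorization through $|\Omega\rangle$ is a clean way to supply one. If you want it self-contained, Lemma~\ref{lemma:mnvec_man} is a one-line index computation.

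One correction concerns your proposed sanity check. With the paper's definition, $|\!\operatorname{vec}(|a\rangle\langle b|)\rangle=|a\rangle\otimes|b\rangle$. So for $A=|a\rangle\langle b|$ and $B=|c\rangle\langle e|$ the left side is $(\Theta\otimes\id)(|a\rangle\langle c|\otimes|b\rangle\langle e|)=|c\rangle\langle a|\otimes|b\rangle\langle e|$. It is not $|b\rangle\langle e|\otimes|a\rangle\langle c|$, which is the untransposed operator under the standard column-stacking convention. The right side expands to $(\id\otimes|b\rangle\langle a|)\,F\,(\id\otimes|c\rangle\langle e|)=|c\rangle\langle a|\otimes|b\rangle\langle e|$. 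The check therefore succeeds against the corrected left side; with the value you wrote, it would have appeared to fail.

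This basis check is also a complete alternative proof. Both sides are linear in $A$ and conjugate-linear in $B$, so agreement on matrix units suffices.
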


Since $X_\star$ is Hermitian of rank 2 and traceless, we can write its spectral decomposition as
$$
X_\star = t\big(|\psi\rangle\langle\psi| - |\phi\rangle\langle\phi|\big),
$$
for some orthogonal $|\psi\rangle, |\phi\rangle \in \cH \otimes \cK$ and some $t>0$. From Corollary~\ref{corollary:partial_trace_is_zero} we have $\Tr_\cH(X_\star)=0$, which implies
$$
\Tr_\cH |\psi\rangle\langle\psi| = \Tr_\cH |\phi\rangle\langle\phi|.
$$
By Uhlmann's theorem \cite{uhlmann1976transition}, there exists a unitary $U$ on $\cH$ such that
$$
|\phi\rangle = (U \otimes \id) |\psi\rangle.
$$
Now choose $A$ such that
$$
|\psi\rangle = |\!\operatorname{vec}(A)\rangle.
$$
Then, using Lemma~\ref{lemma:mnvec_man},
$$
|\phi\rangle = (U \otimes \id) |\!\operatorname{vec}(A)\rangle = |\!\operatorname{vec}(UA)\rangle.
$$
By linearity,
\[
Y_\star = t \big( (\Theta \otimes \text{id})(|\psi\rangle\langle\psi|) - (\Theta \otimes \text{id})(|\phi\rangle\langle\phi|) \big).
\]
Using Lemma~\ref{lemma:swap_lemma} we get
\begin{equation*}
(\Theta \otimes \text{id})(|\psi\rangle\langle\psi|) = (\id \otimes A^\top) F (\id \otimes A^\ast)
\end{equation*}
and
\begin{align*}
(\Theta \otimes \text{id})(|\phi\rangle\langle\phi|) & = \big(\id \otimes (UA)^\top\big) F \big(\id \otimes (UA)^*\big)\\
& = (\id \otimes A^\top U^\top) F (\id \otimes U^* A^*)\\
& = (\id \otimes A^\top)(\id \otimes U^\top) F (\id \otimes U^*)(\id \otimes A^*).
\end{align*}
Therefore,
\begin{align*}
Y_\star & = t (\id \otimes A^\top) \left[ F - (\id \otimes U^\top) F (\id \otimes U^*) \right] (\id \otimes A^*)\\
& = t (\id \otimes A^\top) F (\id - U^\top \otimes U^*) (\id \otimes A^*),
\end{align*}
where the second equality follows from Lemma~\ref{lemma:swap_positions_operators}.
From submultiplicativity of rank under multiplication,
\[
\text{rank}(Y_\star) \leq \text{rank}(\id - U^\top \otimes U^*).
\]
Now diagonalize $U$ as $U = VDV^\dagger$ with $D = \text{diag}(\lambda_1, \dots, \lambda_d)$, $|\lambda_i| = 1$. Then
\[
U^\top = V^* D V^\top, \qquad U^* = V^* D^* V^\top,
\]
so
\[
U^\top \otimes U^* = (V^* \otimes V^*) (D \otimes D^*) (V^\top \otimes V^\top).
\]
Thus $U^\top \otimes U^*$ is similar to $D \otimes D^*$, whose eigenvalues are $\lambda_i \lambda_j^*$. In particular, for every $i$,
\[
\lambda_i \lambda_i^* = 1,
\]
so
\[
\dim \ker (\id - U^\top \otimes U^*) \geq d \quad \Rightarrow \quad \text{rank}(\id - U^\top \otimes U^*) \leq d^2 - d.
\]
Hence
\[
\text{rank}(Y_\star) \leq d^2 - d < d^2,
\]
which contradicts Eq.~\eqref{eq:contradiction_second_component}.

\section{Discussion}

The result in Section~\ref{sec:positive_gap} leaves us with the following stronger open problem:
\begin{problem}
Does there exist an absolute constant $\alpha<1/\sqrt{2}$ for Theorem~\ref{thm:main}?
\end{problem}

% \clearpage
\bibliographystyle{unsrt}
\bibliography{main}

\end{document}